\newtheorem{lemma}{Lemma}
\newtheorem{proposition}{Proposition}
\newtheorem{theorem}{Theorem}
\newtheorem{definition}{Definition}
\newtheorem{assumption}{Assumption}
\newtheorem{example}{Example}
\xpatchcmd{\@thm}{\thm@headpunct{.}}{\thm@headpunct{}}{}{}
\title{\large \bf Computing robust control invariant sets of constrained nonlinear systems: \\ A graph algorithm approach}
\author{
    \centerline{\normalsize Benjamin Decardi-Nelson$^{a}$, Jinfeng Liu$^{a,}$
    \thanks{Corresponding author: J. Liu. Tel: +1-780-492-1317. Fax: +1-780-492-2881. Email: jinfeng@ualberta.ca}}
    \vspace{5mm} \\
    \centerline{\small $^{a}$ Department of Chemical \& Materials Engineering, University of Alberta,}\\
    \centerline{\small Edmonton AB T6G 1H9, Canada}
}
\begin{document}

\date{}
\maketitle
\setstretch{1.39}

\begin{abstract}

	This paper deals with the computation of the largest robust control invariant sets (RCISs) of constrained nonlinear systems. The proposed approach is based on casting the search for the invariant set as a graph theoretical problem.  Specifically, a general class of discrete-time time-invariant nonlinear systems is considered. First, the dynamics of a nonlinear system is approximated with a directed graph. Subsequently, the condition for robust control invariance is derived and an algorithm for computing the robust control invariant set is presented. The algorithm combines the iterative subdivision technique with the robust control invariance condition to produce outer approximations of the largest robust control invariant set at each iteration. Following this, we prove convergence of the algorithm to the largest RCIS as the iterations proceed to infinity. Based on the developed algorithms, an algorithm to compute inner approximations of the RCIS is also presented. A special case of input affine and disturbance affine systems is also considered. Finally, two numerical examples are presented to demonstrate the efficacy of the proposed method.
	
\end{abstract}

\noindent{\bf Keywords:} Nonlinear systems; Outer approximation; Inner approximation; Invariant sets; Graph theory.

\section{Introduction} \label{section_1}

The importance of invariant sets is evident in the numerous attention it has received in the control literature (see \cite{blanchini1999} for an extensive survey on this subject). In particular, it plays a fundamental role in control system design and analysis of constrained dynamical systems due to its implicit stability properties. For example, it plays a key role in the design of model predictive control strategies via the use of control invariant sets as a terminal region constraint \cite{mayne2001,cannon2003}. 

Given a dynamical system, a subset of the state space is said to be control invariant if the state of the system can be maintained within the set forever with the help of a control provided the initial states are chosen from within the set. In addition, if disturbances are considered, then the set is known as robust control invariant set (RCIS). Many results which address computational issues and algorithmic procedures exist for both deterministic and uncertain linear systems \cite{rungger2017,rakovic2005,kerrigan2001,kolmanovsky1998,gilbert1991}. Some results also exist for nonlinear systems \cite{fiacchini2010,alamo2009,bravo2005}. (Robust) control invariant sets are closely linked to viability theory \cite{aubin2009,maidens2013}, reachability analysis \cite{mitchell2005,lygeros2004} and null controllability \cite{homer2017,homer2018,homer2020}. It is worth mentioning that exact and efficient determination of invariant sets even for linear systems is still an open problem. For this reason, outer or inner approximations are often sought after.

Homer and Mhaskar presented an algorithm for estimating null controllable regions of nonlinear systems \cite{homer2017,homer2018} by enlarging an initial estimate of a control invariant set. However, the algorithm makes use of an invariance test function that require guessing an appropriate input sequence for a specified prediction horizon which may not be easy especially the case of systems which are not input affine. This was further extended to an approach that solves a reverse time-optimal control problem in \cite{homer2020}. In these cases however, the presence of disturbances were not considered. Fiacchini and coworkers also presented an algorithm based on difference of two convex (DC) functions to estimate convex robust control invariant sets of nonlinear systems \cite{fiacchini2010}. However, the algorithm requires contractivity and convergence to the largest RCIS was not provided. In \cite{mitchell2005}, a grid-based algorithm which solves a time-dependent Hamilton-Jacobi formulation was used to estimate reachable sets of continuous systems with uncertainties.

Graph theoretical methods have been successfully used in the analysis of nonlinear systems \cite{osipenko1983,eidenschink1997,mischaikow2002,szolnoki2003}. The idea is to approximate the trajectories of the dynamical system using directed graphs and then analyze the constructed graph using graph theoretical methods. This method has been used in identifying or computing periodic orbits, invariant sets, recurrent sets, Lyapunov exponents, etc (see \cite{osipenko2007} for more applications). However, all the studies focused on autonomous dynamical systems with the exception of \cite{szolnoki2003} who used graph theoretical method to determine control sets. Also, only outer approximations of the invariant sets were considered in these studies.

Motivated by the success of graph theory in the analysis of autonomous nonlinear systems, this paper presents a new algorithm for computing approximations of the largest robust control invariant set for general constrained time-invariant discrete-time uncertain nonlinear systems. In contrast to some existing methods, we do not assume polynomial dynamics, nor require contractivity nor require prior knowledge of the structure of the set. More importantly, the proposed algorithm yields an inner approximation of the largest robust control invariant set for a sufficiently high precision. By combining our derived robust control invariant condition with the subdivision technique, our algorithm shows higher efficiency as compared to the grid-based methods. Furthermore, we propose an approach similar to feedback linearization of nonlinear systems to further reduce the computational load.

The remainder of the paper is organised as follows: Section \ref{section_2} is concerned with preliminaries and problem formulation as well as a brief introduction to directed graph representation of autonomous dynamical system and invariant set investigation of the graph. In Section \ref{section_3}, we present an algorithm to compute the outer and inner approximations of the largest robust control invariant set. We also prove the convergence of the algorithm to the largest robust control invariant set and briefly discuss some computational issues in this section. We present two illustrative examples in Section \ref{section_4} and then conclude the paper in Section \ref{section_5}. \\

\noindent \textbf{Notation.} $\mathbb{Z}$ denotes the set of integers $\{\hdots,-2,-1,0,1,2,\hdots \}$. $\mathbb{Z}_+$ denotes the set of non-negative integers $\{ 0,1,2,\hdots \}$. $\{ z_k \}_{k \in \mathbb{Z}_+}$ denotes an ordered set of numbers according to $k \in \mathbb{Z}_+$ $\{ z_0, z_1, z_2, \hdots \}$. $\mathbb{B}$ denotes the unit ball in $\mathbb{R}^n$ with respect to the infinity norm. The operator $|\cdot|$ denotes the Euclidean norm of a vector. The Minkowski set addition of two sets $P,Q \subset \mathbb{R}^n$ is defined as $Q+P = \{p + q \in \mathbb{R}^n | q \in Q,~ p \in P\}$. A directed graph is denoted as $G=(V,E)$ with $V$ denoting the set of vertices of the graph and $E$ denoting the set of ordered pairs of vertices known as edges. A function $f:X \rightarrow X$ is said to be homeomorphic in $X$ if it is continuous with continuous inverse in $X$.

\section{Preliminaries} \label{section_2}

\subsection{System description and problem formulation}

In this work, we consider a class of discrete-time nonlinear systems that can be described by the following model:
\begin{equation}\label{system}
    x^+ = f(x, u, w)
\end{equation}
where $x^+ \in \mathbb{R}^n$ denotes the state at the next sampling time, $x \in \mathbb{R}^{n}$ is the state, $u \in \mathbb{R}^{m}$ represents the control input and $w \in \mathbb{R}^{n}$ denotes the unknown disturbance input. We consider that the state, control and disturbance are subject to the following constraints:
\begin{equation} \label{constraint}
    x \in X \subseteq \mathbb{R}^{n}, ~ u \in U \subseteq \mathbb{R}^{m} ~ \textrm{and} ~ w \in W \subseteq \mathbb{R}^{n}
\end{equation}

Throughout our discussion, we make the following assumptions:
\begin{assumption}[Compactness of constraints] \label{compactness_assumption}
    The sets $X$, $U$ and $W$ are compact metric spaces. In addition, we assume that $U$ and $W$ have the origin in their interiors.
\end{assumption}

\begin{assumption}[Smoothness of system] \label{continuity_assumption}
    The function $f:X \times U \times W \rightarrow X$ is a sufficiently smooth vector field in $X$.  In addition, we assume that for each $x \in X$, $u \in U$ and $w \in W$, $f(x,u,w)$ is uniquely defined.
\end{assumption}

We also recall the following definitions on forward invariant set and robust control invariant set:

\begin{definition}[Foward invariant set \cite{blanchini1999}] \label{postively_invariant_set}
    A set $R \subset X$ is said to be a forward or positively invariant set of the system $x^+=f(x)$ if for every $x \in R$, $f(x) \in R$. 
\end{definition}

\begin{definition}[Robust control invariant set \cite{blanchini1999}]\label{robust_control_invariant_set}
    A set $R \subset X$ is said to be a robust control invariant set (RCIS) for system (\ref{system}) and constraint set (\ref{constraint}) if for every $x \in R$, there exist a feedback control law $u = \mu(x) \in U$ such that $R$ is forward invariant for the closed-loop system $f(x,u,w)$ for all $w \in W$. 
\end{definition}

In the absence of disturbances, the robust control invariant set $R$, falls back to the usual notion of control invariant set. In this work, we are interested in determining the largest robust control invariant set contained in $X$. This set is also known as the maximal RCIS, strongly reachable set \cite{bertsekas1972} or discriminating kernel \cite{cardaliaguet1994}. We denote by $R(X)$ the largest RCIS of (\ref{system}) with constraints (\ref{constraint}). 

\subsection{Graph construction}

In this section, we briefly present the notion of symbolic image of a dynamical system and its construction for autonomous systems. For a more detailed discussion, the reader may refer to Chapters 2 and 5 of \cite{osipenko2007}. 

Consider a discrete-time autonomous system as follows:
\begin{equation} \label{auto_system}
    x^+ = \hat{f}(x)
\end{equation}
where $\hat{f}:X \rightarrow X$ is a homeomorphism on a compact domain $X \subset \mathbb{R}^{n}$. 

A symbolic image of (\ref{auto_system}) is a finite approximation of the dynamics of the system using a directed graph. To construct a symbolic image, the state space $X$ is quantized with the help of a finite covering, $\mathcal{C} = \{ B_1, \hdots, B_l \}$, of the state space $X$. The finite covering $\mathcal{C}$ is a collection of closed sets known as cells or boxes $B_i,i=1,\hdots l$, such that
\begin{subequations}
    \begin{align}
         & X \subseteq \cup_{B_i \in \mathcal{C}} B_i  \\ 
         & B_i \cap B_j = \emptyset, ~ \forall B_i,B_j \in \mathcal{C} ~ \text{with} ~ i \neq j 
    \end{align}
\end{subequations}
The diameter of the covering $\mathcal{C}$ is given by
\begin{equation*}
    \text{diam}(\mathcal{C}) := \max_{B_i \in \mathcal{C}}\text{diam}(B_i)
\end{equation*}
where diam($B_i$) = $\text{sup}\{ |x-y|:x,y \in B_i \}$. Since $X$ is compact, it is always possible to obtain a finite covering. We now introduce the symbolic image of $\hat{f}$ with respect to the covering $\mathcal{C}$.

\begin{definition}[Symbolic image \cite{osipenko2007}]
    Let $G$ be a directed graph with $l$ vertices where each vertex is a cell or box $B_i$ in a finite covering $\mathcal{C}$ of the domain $X$ of system (\ref{auto_system}). The vertices $B_i$ and $B_j$ are connected by a directed edge $B_i \rightarrow B_j$ if 
    \begin{equation*}
        B_j \cap \hat{f}(B_i) \neq \emptyset
    \end{equation*}
    where $\hat{f}(B_i):=\{y | y= \hat{f}(x), x \in B_i\}$. The graph $G$ is called a symbolic image of (\ref{auto_system}) with respect to the covering $\mathcal{C}$.
\end{definition}

\begin{definition}[Admissible path \cite{osipenko2007}] \label{admissible_path}
    A sequence $\{ z_k \}_{k\in \mathbb{Z}_+}$ with each element $z_k$ taking a value from the set of vertices of $G$ is called an admissible path if for each $k \in \mathbb{Z}_+$, the graph $G$ contains the edge $z_k \rightarrow z_{k+1}$.
\end{definition}

To understand the relationship between an admissible path on the symbolic image and the trajectories of (\ref{auto_system}), we recall the notion of $\varepsilon$-orbit.

\begin{definition}[$\varepsilon$-orbit \cite{sakai1994}]
    For a given $\varepsilon > 0$, a sequence of points $\{ x_k \}_{k \in \mathbb{Z}_+}$ in $X$ is called an $\varepsilon$-orbit of system (\ref{auto_system}) if for any $k \in \mathbb{Z}_+$
    \begin{equation*}
        |\hat{f}(x_k) - x_{k+1}| < \varepsilon 
    \end{equation*}
\end{definition}

 Due to round off errors in numerically computed trajectory of system (\ref{auto_system}), its real trajectory is rarely known in practice. Thus, a numerically computed trajectory of system (\ref{auto_system}) is usually no more than an $\varepsilon$-orbit for sufficiently small positive $\varepsilon$. There is therefore a natural correspondence between admissible paths on the symbolic image and the $\varepsilon$-orbits. That is, an admissible path on the graph $G$ represents an $\varepsilon$-orbit of system (\ref{auto_system}) and vice versa. Specifically, if the sequence $\{z_k\}_{k \in \mathbb{Z}_+}$ is an admissible path on the symbolic image $G$, then there exist a sequence $\{x_k, x_k \in z_k\}_{k \in \mathbb{Z}_+}$ that is an $\varepsilon$-orbit of system (\ref{auto_system}) such that the following inequality hold
 \begin{equation*}
    |\hat{f}(x_k)-x_{k+1}| \leq \text{diam}(z_{k+1}) < \varepsilon
 \end{equation*}
It is obvious that the finer the covering, the more precise the approximation of the system trajectories. 

\begin{definition}[Out-degree of a vertex]
        The out-degree of a vertex in a directed graph is the number of edges going out of the vertex.
\end{definition}

If a vertex ($B_i$) of the symbolic image of system (\ref{auto_system}) has zero out-degree, then its image $\hat{f}(B_i)$ has no intersection with any other vertex on the symbolic image. i.e. $\hat{f}(B_i) \cap X = \emptyset$. Therefore its image $\hat{f}(B_i)$ lies outside $X$. This implies that any trajectory starting from the cell will exit the state constraint $X$ in finite time. An admissible path on the symbolic image $G$ may either be finite or infinite. An admissible path on the symbolic image is finite if it ends with a vertex that has zero out-degree. Otherwise, it is infinite. 

The construction of the symbolic image is depicted in Example \ref{symbolic_image_construction_example}.

\begin{example}[Construction of the symbolic image] \label{symbolic_image_construction_example}
    Let $X=\{ x \in \mathbb{R}^2:\| x \|_{\infty} \leq 2 \}$ and consider the autonomous two dimensional system defined by
    \begin{equation*}
        x^+ = \hat{f}(x) =
        \begin{bmatrix}
            x_1 + 1.5 \\
            x_2 + 1.5
        \end{bmatrix}
    \end{equation*}
\end{example}

To construct the symbolic image of the system in Example \ref{symbolic_image_construction_example}, the state constraint $X$ is first quantized. Quantization of $X$ is not unique. One of such quantizations is to use 16 cells of unit length to obtain the finite covering $\mathcal{C}= \{ B_1, \hdots, B_{16} \}$ of $X$. Since $X$ satisfies Assumption \ref{compactness_assumption}, this is possible. Thereafter, the image of each cell is obtained and used to construct the symbolic image. The quantized $X$ is shown in Figure \ref{symbolic_image}(a). The shaded cells ($B_6,B_7,B_{10},B_{11}$) in Figure \ref{symbolic_image}(a) are the image of cell $B_{13}$ i.e. they have an intersection with $\hat{f}(B_{13})$. The resulting graph is shown in Figure \ref{symbolic_image}(b). The existence of a directed edge between cells $B_{13}$ and $B_6$ implies there exist an admissible path between the two cells. Cells whose image have no intersection with any other cells will have no outgoing edges in the symbolic image i.e. out-degree will be zero.  The symbolic image is constructed by repeating this procedure for all other cells. The union of the resulting graphs form the symbolic image. 

What remains is how the image of a cell can be approximated. One approach to approximate the image of a cell is to finitely sample points in the cell and then find the image of the points. The image of the cell is the union of the image of each sampled point in the cell. Different sampling methods that can be used are shown in Figure \ref{box_sampling_types}. Another approach is the use of interval arithmetic \cite{moore2009}. This involves performing numerical computations on intervals rather than numbers. Interval arithmetic was used in \cite{bravo2005} to compute one-step reachable sets.

\begin{figure}[t] 
    \center{\includegraphics[width=0.7\textwidth]{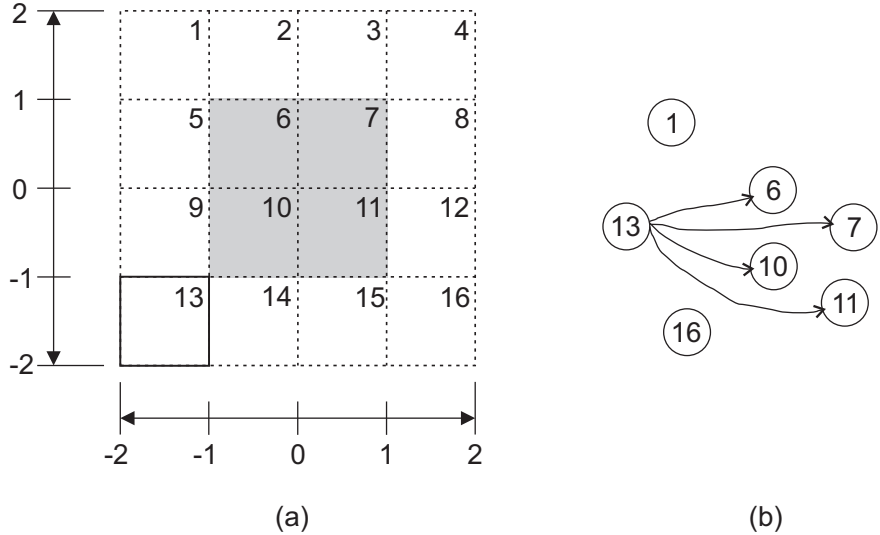}}
    \caption{\label{symbolic_image} Construction of the symbolic image. (a) Image of $B_{13}$ intersects with the shaded cells $B_6,B_7,B_{10},B_{11}$. (b) Directed graph depicting the image of $B_{13}$.}
\end{figure}

\begin{figure}[t] 
    \center{\includegraphics[width=0.7\textwidth]{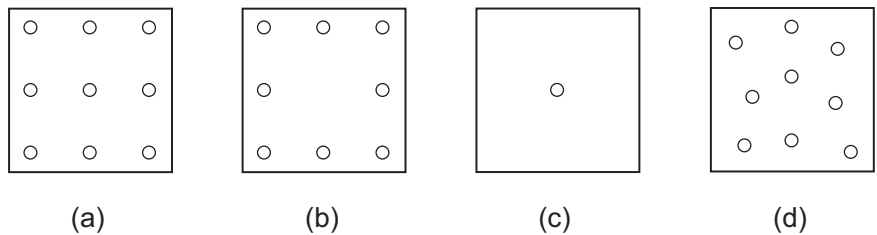}}
    \caption{\label{box_sampling_types} Different sampling types (a) Uniform sampling (b) Boundary sampling (c) Center sampling (d) Random sampling. }
\end{figure}

\subsection{Set invariance condition for autonomous systems}

In the previous section, we described the construction of the symbolic image $G$ of (\ref{auto_system}) which is an approximation of its dynamics using directed graphs. In this section, we describe how the resulting directed graph can be investigated using graph theory to obtain an outer approximation of the largest forward invariant set for autonomous dynamical systems. We recall the terminology in graph theory:

 \begin{definition}[Strongly connected graph] \label{strongly_connected}
     A directed graph $G=(V,E)$ is said to be strongly connected if there is an admissible path in both directions between each pair of vertices of the graph.
 \end{definition}
For a graph $G$ that is not strongly connected, it may contain subgraphs that are strongly connected. These subgraphs are known as the strongly connected component subgraphs of $G$. 

The following theorem summarizes how we may obtain an outer approximation of the largest forward invariant set of an autonomous system based on its symbolic image \cite{osipenko2007}.

 \begin{theorem} \label{nonleaving_autonomous}
     Let $G=(V,E)$ having a set of vertices $V$ and a set of ordered pairs of vertices $E$ be a symbolic image of the mapping $\hat{f}$ in (\ref{auto_system}) with respect to a finite convering $\mathcal{C}$ of $X$. Then 
     \begin{enumerate}[i.]
        \item \label{theorem_1_i} the vertices of the largest strongly connected component subgraph $G_s=(V_s,E_s)$ of $G$ have infinite admissible paths passing through them.
        \item \label{theorem_1_ii} any element of $V$ but not $V_s$ with a path to at least one vertex of $G_s$ also has an infinite admissible path passing through it.
        \item \label{theorem_1_iii} the union of the elements of $(i)$ and $(ii)$, $I^+(G)$, is a closed neighbourhood of the largest forward invariant set $R$ of (\ref{auto_system}) in $X$.ie.
     \begin{equation}
        R \subset I^+(G)
     \end{equation}
     \end{enumerate}
 \end{theorem}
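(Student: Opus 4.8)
The three claims of Theorem~\ref{nonleaving_autonomous} can be proved by translating the graph-theoretic objects back into statements about $\varepsilon$-orbits and then invoking the correspondence between admissible paths and $\varepsilon$-orbits established in the preceding subsection. The plan is to treat items (i) and (ii) as essentially combinatorial facts about directed graphs, and to reserve the real work for item (iii), which is where the dynamical content lives.

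First I would prove (i). Let $G_s=(V_s,E_s)$ be the largest strongly connected component subgraph of $G$. Pick any vertex $z \in V_s$. Since $G_s$ is strongly connected, for any two vertices of $V_s$ there is an admissible path between them lying entirely inside $G_s$; in particular there is a closed admissible path (a cycle) through $z$ of positive length, and by concatenating this cycle with itself indefinitely one obtains an infinite admissible path through $z$. This uses only Definition~\ref{strongly_connected} and Definition~\ref{admissible_path}. For (ii), take $z \in V \setminus V_s$ with an admissible path $z = z_0 \to z_1 \to \cdots \to z_r$ where $z_r \in V_s$. Prepend this finite path to the infinite admissible path through $z_r$ constructed in (i); the result is an infinite admissible path through $z$. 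Again this is purely a statement about the graph.

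For (iii), the key step is the following: if a cell $B$ lies in the largest forward invariant set $R$ of (\ref{auto_system}) — more precisely, if $B \cap R \neq \emptyset$ — then $B$ is a vertex of $G$ admitting an infinite admissible path, hence $B \in I^+(G)$. To see this, take $x_0 \in B \cap R$. Because $R$ is forward invariant, the true trajectory $x_{k+1} = \hat{f}(x_k)$ stays in $R \subset X$ for all $k$, so each $x_k$ belongs to some cell $z_k$ of the covering $\mathcal{C}$ with $z_0 = B$ (if a boundary point lies in several cells, pick one). Since $x_{k+1} = \hat{f}(x_k) \in z_{k+1}$ and $x_k \in z_k$, we have $\hat{f}(z_k) \cap z_{k+1} \supseteq \{\hat{f}(x_k)\} \neq \emptyset$, so by the definition of symbolic image the edge $z_k \to z_{k+1}$ is present in $G$; thus $\{z_k\}_{k \in \mathbb{Z}_+}$ is an infinite admissible path through $B$. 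By (i) and (ii), the set of vertices carrying an infinite admissible path is exactly $I^+(G)$ — here one needs the converse direction as well: any vertex with an infinite admissible path either lies in some strongly connected component (and hence in the \emph{largest} one, or one must argue that every such vertex either sits on a cycle or feeds into one, since an infinite path in a finite graph must revisit a vertex and therefore contains a cycle), so $I^+(G)$ indeed collects all and only the vertices with infinite admissible paths. Consequently $R \subseteq \bigcup\{B_i : B_i \cap R \neq \emptyset\} \subseteq I^+(G)$, and $I^+(G)$ is closed as a finite union of the closed cells $B_i$.

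The main obstacle is the subtlety hidden in the phrase ``the largest strongly connected component'' in item (i) and in matching it to ``all vertices with an infinite admissible path'' needed for (iii). An infinite admissible path in a finite graph must, by the pigeonhole principle, eventually revisit a vertex, so it contains a cycle; every vertex of that cycle lies in some strongly connected component, and every vertex on the path before the cycle has a path into that component. The delicate point is that $G$ may have several maximal strongly connected components, so ``largest'' must be read as ``the union of all non-trivial strongly connected components'' (or the statement of (i) must be applied componentwise); I would make this reading explicit at the start of the proof so that the union in (iii) genuinely exhausts every vertex carrying an infinite admissible path. Once that bookkeeping is pinned down, the dynamical step — a forward-invariant true trajectory projects onto an infinite admissible path — is immediate from the symbolic-image definition, and closedness is trivial.
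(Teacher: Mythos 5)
The paper never proves Theorem \ref{nonleaving_autonomous} --- it is stated as a result imported from the symbolic-image literature (Osipenko) with no \texttt{proof} environment following it --- so there is no internal proof to compare against; judged on its own, your argument is correct and is the standard one. Items (i) and (ii) are, as you say, purely combinatorial (a cycle through a vertex of a non-trivial strongly connected component, concatenated with itself and prefixed by the access path), and the dynamical content of (iii) is exactly the projection of a forward-invariant true orbit onto an admissible path via $\hat f(x_k)\in \hat f(z_k)\cap z_{k+1}\neq\emptyset$. The one point that genuinely needs the care you give it is the reading of ``the largest strongly connected component'': for the converse inclusion implicitly needed in (iii) --- that $I^+(G)$ exhausts all vertices carrying an infinite admissible path --- one must take $V_s$ to be the union of all non-trivial strongly connected components, since an infinite path in a finite graph must eventually enter a cycle by pigeonhole, but different portions of $R$ may feed different components; your explicit resolution of that ambiguity is what makes the proof close, and a trivial single-vertex component without a self-loop must indeed be excluded, as you note. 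One last remark: (iii) asserts that $I^+(G)$ is a closed \emph{neighbourhood} of $R$, not merely a closed superset. Your ``pick one cell'' convention establishes only the displayed containment $R\subset I^+(G)$; if you instead run the same trajectory-projection argument for \emph{every} cell containing the initial point, you conclude that all cells meeting $R$ are selected, and since the union of all cells containing a given point of $X$ covers a neighbourhood of that point, this is what upgrades the closed superset to a closed neighbourhood.
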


 \begin{figure}[t] 
     \center{\includegraphics[width=0.9\textwidth]{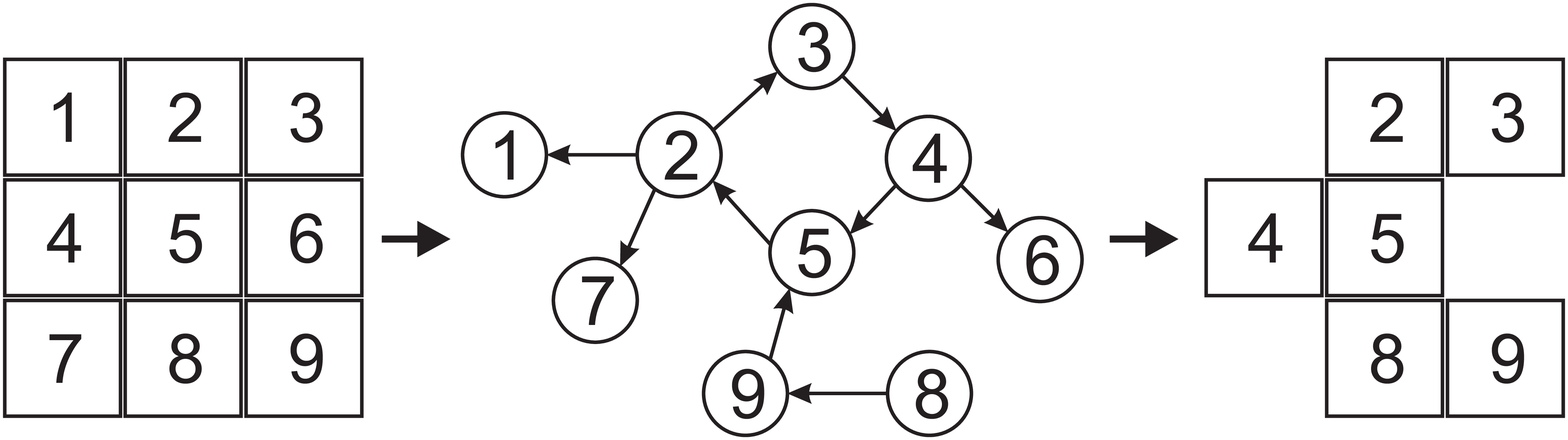}}
     \caption{\label{example_graph} Example graph construction for a ficticious autonomous system and resulting invariant set. Left: Region in state space under study; Center: Approximation of the flow of the system using directed graph. Right: Selected cells which is an outer approximation of the forward invariant set.}
 \end{figure}

Fundamentally, Theorem \ref{nonleaving_autonomous} characterizes cells that have infinite admissible paths passing through them on the symbolic image. The central idea is illustrated in Figure \ref{example_graph} for a ficticious autonomous system. In the figure, cells $B_2,B_3,B_4$ and $B_5$ form the strongly connected components subgraph of the symbolic image since there exist an admissible path in both directions between each pair of vertices on the symbolic image (Theorem \ref{nonleaving_autonomous}: \ref{theorem_1_i}). Also, since there exist an admissible path from cells $B_8$ and $B_9$ to an element (Cell $B_5$) of the the strongly connected components subgraph, they also have infinite admissible paths passing through them (Theorem \ref{nonleaving_autonomous}: \ref{theorem_1_ii}). Notice that Cell $B_2$ also has admissible path to Cells $B_1$ and $B_7$ which have zero outdegree. Similarly, there exist an admissible path from Cell $B_4$ to Cell $B_6$. Thus, both finite and infinite admissible paths pass through Cells $B_2$ and $B_4$. Therefore, the union of Cells $B_2,B_3,B_4,B_5,B_8$ and $B_9$ form a closed neighbourhood of the largest forward invariant set in the region of the state space under study (Theorem \ref{nonleaving_autonomous}: \ref{theorem_1_iii}). Since Cells $B_2$ and $B_4$ also have finite admissible paths passing through them, the set is not actually forward invariant and is merely an outer approximation of the largest forward invariant set contained in $X$. Finally, since Cells $B_1, B_6$ and $B_7$ have no outgoing edges (zero outdegree), any trajectory starting from them will exit the region of state space understudy in finite time and therefore do not form part of the approximation of forward invariant set.

We now proceed to present the main results of this work.

\section{Main results} \label{section_3}

In the previous section, we demonstrated how foward invariant sets can be outer approximated for autonomous dynamical systems based on graph theory. In this section, we extend this result for constained dynamical systems with controls and disturbances, and then use this result to develop efficient algorithms for computing robust control invariant sets. Recall that the dynamics of a system needs to be transcribed in a directed graph before analysis using graph algorithms. While the graph construction is straight forward to do for autonomous dynamical systems (\ref{auto_system}), it is not the case for dynamical systems with controls and disturbances (\ref{system}). We show how the directed graph can be constructed for system (\ref{system}) for our intended purposes as well as the robust control invariance condition based on the constructed graph.

We begin by presenting the directed graph construction and the robust control invariant set condition for constrained controlled systems with disturbances. Then, we present the algorithm for computing the robust control invariant set. Finally, we present a way to reduce the computational load for input affine systems which are a special case of system (\ref{system}).

\subsection{Robust control invariance condition}

Let us consider the set-valued map, also called parametized map,
\begin{equation} \label{differential_inclusion}
    F(x,w) := f(x,U,w) =  \{f(x,u,w) \}_{\cup_{u \in U}}
\end{equation}
The map $F$ associates with each state $x$ and disturbance $w$ the subset $F(x,w)$ of feasible next states. Therefore, system (\ref{system}) defined by the family of parametized difference equations is actually governed by the difference inclusion

\begin{equation}\label{system_difference_inclusion}
    x^+ \in F(x,w)
\end{equation}

With this difference inclusion, all feasible trajectories of (\ref{system}) under every initial state and disturbance can be obtained.  However, care must be taken when constructing the graph. If one naively constructs the symbolic image of system (\ref{system}) by considering the union of all feasible trajectories of (\ref{system_difference_inclusion}) for every initial state and disturbance, the graph obtained will not be suitable for our purposes. This is because the so-called ``best-case" will be included. The ``best-case" are instances where the disturbances aid the control inputs. They must be avoided in the graph construction as it does not provide guaranteed existence of a control law that will keep the system in the constraint set under every disturbance realization. Thus, to ensure guaranteed existence of a control law to keep the states within constraint set (\ref{constraint}) at all times, not all trajectories must be allowed on the symbolic image which will be investigated. A pessimistic or worst-case view must therefore be adopted in the construction of the symbolic image. In short, the control seeks to enlarge the robust control invariant set while the disturbance seeks to make it smaller.

To address this, we construct individual graphs for each $w \in W$ using the difference inclusion (\ref{system_difference_inclusion}) and analyze them using Theorem \ref{nonleaving_autonomous}. By taking an intersection of the resulting sets for every $w \in W$, an outer approximation of the robust control invariant set is obtained. Let us define $G_w = (V_w,E_w)$ as the symbolic image of $F(X,w)$ with respect to the finite covering $\mathcal{C}$ of $X$ and the constraint sets (\ref{constraint}). Also, let
\begin{equation} \label{robust_control_invariance_condition}
    K := \cap_{w \in W} I^+(G_w)
\end{equation} 
The set $K$ represents the cells with infinite admissible paths passing through them irrespective of the disturbance realization. The following theorem characterises an outer approximation of the robust control invariant set of (\ref{system}) with constraint set (\ref{constraint}).
\begin{theorem} \label{nonleaving_robust}
    Let $K$ be defined as in Equation \ref{robust_control_invariance_condition} above. If Assumptions \ref{compactness_assumption} and \ref{continuity_assumption} hold, then the set $K$ is a closed neighbourhood of the largest robust control invariant set $R(X)$ of system (\ref{system}) with constraint sets (\ref{constraint}) i.e.
    \begin{equation}
        R(X) \subset K
    \end{equation}
\end{theorem}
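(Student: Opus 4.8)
\textbf{Overall approach.} The plan is to show $R(X) \subset K$ by a pointwise argument: take any $x^* \in R(X)$, and prove that for every $w \in W$, the cell (or cells) of the covering $\mathcal{C}$ containing $x^*$ belong to $I^+(G_w)$. Since $K$ is the intersection over all $w$ of the $I^+(G_w)$, this yields $x^* \in K$. The key reduction is to recognize that fixing $w$ turns the difference inclusion $x^+ \in F(x,w)$ into an autonomous-style set-valued system to which Theorem~\ref{nonleaving_autonomous} applies via the symbolic image $G_w$, and that robust control invariance of $R(X)$ implies an ordinary control-invariance-type property for each frozen $w$.

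\textbf{Step 1: from robust control invariance to an infinite trajectory for each fixed $w$.} Let $x^* \in R(X)$. By Definition~\ref{robust_control_invariant_set}, there is a feedback law $u = \mu(x) \in U$ with $f(x,\mu(x),w) \in R(X) \subset X$ for all $x \in R(X)$ and all $w \in W$. Fix an arbitrary $w \in W$ and consider the sequence $x_0 = x^*$, $x_{k+1} = f(x_k, \mu(x_k), w)$. This is an infinite sequence that stays in $R(X) \subseteq X$ forever, and at each step $x_{k+1} \in F(x_k, w)$. So for each fixed $w$ there is a genuine infinite trajectory of the difference inclusion $x^+ \in F(x,w)$ through $x^*$ that never leaves $X$.

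\textbf{Step 2: lift the infinite trajectory to an infinite admissible path on $G_w$.} Define $z_k$ to be a cell of $\mathcal{C}$ with $x_k \in z_k$ (cells are closed and cover $X$, so such a choice exists; if $x_k$ lies on a shared boundary, pick any one). Since $x_{k+1} \in F(x_k,w)$ and $x_k \in z_k$, we have $x_{k+1} \in F(z_k, w) \cap z_{k+1}$, so $F(z_k,w) \cap z_{k+1} \neq \emptyset$; hence $G_w$ contains the edge $z_k \to z_{k+1}$ for every $k$. Thus $\{z_k\}_{k\in\mathbb{Z}_+}$ is an infinite admissible path on $G_w$ passing through the cell containing $x^*$. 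By Theorem~\ref{nonleaving_autonomous} (parts \ref{theorem_1_i}--\ref{theorem_1_iii}, applied to $F(\cdot,w)$ in place of $\hat f$), every vertex carrying an infinite admissible path lies in $I^+(G_w)$; in particular $z_0 \in I^+(G_w)$, so $x^* \in I^+(G_w)$. Since $w$ was arbitrary, $x^* \in \cap_{w\in W} I^+(G_w) = K$, and since $x^*\in R(X)$ was arbitrary, $R(X) \subset K$. Closedness of $K$ follows because each $I^+(G_w)$ is a finite union of closed cells, and an arbitrary intersection of closed sets is closed.

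\textbf{Anticipated main obstacle.} The delicate point is Step 2's use of Theorem~\ref{nonleaving_autonomous} for a \emph{set-valued} map $F(\cdot,w)$, whereas that theorem is stated for a homeomorphism $\hat f$; I would need to argue (or invoke the discussion preceding it) that the symbolic-image/infinite-admissible-path machinery carries over verbatim once one replaces $\hat f(B_i)$ by $F(B_i,w)$ in the edge definition — the proof of Theorem~\ref{nonleaving_autonomous} only uses that edges encode ``the image of a cell meets another cell,'' which is exactly how $G_w$ is built. A secondary subtlety is the boundary/covering-overlap issue when choosing the cells $z_k$, and ensuring the argument does not secretly use continuity of $f$ in $w$ (it does not — $w$ is frozen throughout); Assumptions~\ref{compactness_assumption} and \ref{continuity_assumption} enter only to guarantee that finite coverings exist and that $F(x,w)$ is a well-defined compact-valued map so the symbolic images $G_w$ are finite graphs.
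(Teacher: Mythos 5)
Your proof is correct and is essentially the paper's argument in contrapositive form: the paper argues by contradiction that a point of $R(X)$ outside $K$ would admit, for some $w$, only finite admissible paths terminating in zero-outdegree cells and hence a trajectory escaping $X$, while you argue directly that the closed-loop trajectory under the invariance-inducing feedback, for each frozen $w$, lifts to an infinite admissible path on $G_w$ and therefore places the cell containing $x^*$ in every $I^+(G_w)$. Your version is in fact more explicit about the key lifting step (every real transition induces an edge because the symbolic image over-approximates the dynamics), which the paper's terse contradiction argument leaves implicit.
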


\begin{proof}
    We prove by contradition. Assume there exists a state $x \in R(X)$ but $x \notin K$. Then by definition of $K$, it implies that there exist Cells $B_i \subset X$ containing $x$ without any infinite admissible path passing through it and $B_j \subset X$ with no outgoing edge (zero outdegree) such that there exist a disturbance sequence $\{ w_0,\hdots,w_k \}$ for every input sequence $\{u_0,\hdots,u_k\}$ such that the finite sequence $\{z_0, \hdots, z_k\}$ with $z_0=B_i$ and $z_k=B_j$ exist. This implies that the trajectory of $x$ will eventually escape from $X$. This however, contradicts the assumption that $x \in R(X)$.
\end{proof}

Theorem \ref{nonleaving_robust} characterises the cells whose union forms an outer approximation of the robust control invariant set of system (\ref{system}) and constraint sets (\ref{constraint}) given a quantized state constraint set. This is the central idea we use in designing an algorithm for computing the largest robust control invariant set contained in $X$.

\subsection{Computation of robust control invariant set}

In this section, we present the algorithm for computing the robust control invariant set of system (\ref{system}) subject to constraint set (\ref{constraint}) and prove its convergence to the maximal RCIS. The algorithm is combined with the subdivision process (see \cite{dellnitz2002} for more discussion) to improve the computational efficiency. This is realized using a family of finite coverings of the RCIS starting from the state constraint. In each step, a set of cells are selected according to Theorem \ref{nonleaving_robust} and then subdivided while the remainder are discarded. The algorithm is achieved in three main steps namely subdivision, graph construction and selection. Considering the $k$-th iteration in the proposed algorithm, the operations are outlined below.
\begin{itemize}
    \item In the subdivision step, a finer covering of the RCIS is generated by dividing the current cells along one of the dimensions. If $\hat{\mathcal{C}}_{d_k}$ and $\mathcal{C}_{d_{k-1}}$ are coverings of the RCIS where $d_k$ and $d_{k-1}$ denote their respective diameters, then $d_k > d_{k-1}$ and    
    \begin{equation*}
        \cup_{B \in \hat{\mathcal{C}}_{d_k}} B = \cup_{B \in \mathcal{C}_{d_{k-1}}} B
    \end{equation*}
    The set that is subdivided does not change other than have cells with smaller diameter. In each iteration of the algorithm, the dimension along which the cells are divided is cycled. The function $subdivide()$ is used to compute the subdivision.

    \item Following the subdivision step, the graph construction step is conducted. This is achieved by creating a collection of graphs $G_k = \{ G_w^k = (V_w^k, E_w^k) ~ \forall w \in W\}$ with respect to the covering obtained from the subdivision step $\mathcal{C}_{d_k}$ where
        \begin{align*}
            V^k_{w} &= \hat{\mathcal{C}}_{d_k} ~ and \\
            E^k_{w} &= \{ (B_i, B_j) \in \hat{\mathcal{C}}_{d_k} \times \hat{\mathcal{C}}_{d_k} ~ | ~ F(B_i, w) \cap B_j \neq \emptyset  \}
        \end{align*}
        This is realized in the algorithm as the function $graph()$.
    \item Finally, the selection step involves the selection of the set of cells that have infinite paths passing through them irrespective of $w \in W$ using the robust control invariant set condition in Theorem \ref{nonleaving_robust}. i.e.
        \begin{equation*}
                        \mathcal{C}_{d_k} = \{ B \in \hat{\mathcal{C}}_{d_k} ~|~ B \in \bigcap_{G \in G_k} I^+(G) \}
        \end{equation*}
        The cells that are not selected are discarded while the selected ones goes on to the next iteration. This is represented in the algorithm as the function $select()$.
\end{itemize}

The complete algorithm is summarized in Algorithm \ref{algorithm_1} below:

\begin{algorithm} \label{algorithm_1}

\caption{Computing maximal robust control invariant set}
\KwIn{system (\ref{system}), constraint sets (\ref{constraint}) and maximum number of iterations $N$}
\KwOut{largest robust control invariant set}
$\mathcal{C}_{d_0} \leftarrow X$ \tcp{Initialization}

\For{$k \leftarrow 1,2,3,\cdots, N$}{
    \If{$\mathcal{C}_{d_{k}} = \emptyset$}{ 
        $\mathcal{C}_{d_N} \leftarrow \mathcal{C}_{d_{k}}$ \\
        \textbf{break}       
    }

    \If{$\mathcal{C}_{d_{k}} = \mathcal{C}_{d_{k-1}} $}{
        $\mathcal{C}_{d_N} \leftarrow \mathcal{C}_{d_{k}}$ \\
        \textbf{break}       
    }

    $\hat{\mathcal{C}}_{d_{k}} \leftarrow subdivide(\mathcal{C}_{d_{k}})$ \\
    $G_{k} \leftarrow graph(\hat{\mathcal{C}}_{d_k}) $ \\
    $\mathcal{C}_{d_{k+1}} \leftarrow select(G_k)$

}
\textbf{return $\mathcal{C}_{d_N}$}

\end{algorithm}

Algorithm \ref{algorithm_1} is initialized using the state constraint $X$. This not only ensures that the domain understudy is restricted to $X$, but also ensures that the state constraints are enforced. Also, since Equation \ref{system_difference_inclusion} is used during the graph construction step, the input and the disturbance constraint sets need to be finitely sampled for numerical implementation of the algorithm. An alternative is to treat the input and disturbance sets as intervals and use interval arithmetic for the computations. Notice that the computational load of the algorithm depends heavily on the number of cells generated at each iteration. This grows exponentially as the algorithm progresses. Similar to other algorithms for numerically computing invariant sets, there is a trade off between compuational load and accuracy.

In what follows, we prove the convergence of the algorithm to the largest robust control invariant set.

\subsection{Convergence of algorithm}

We now prove that Algorithm \ref{algorithm_1} always converge to the largest RCIS $R(X)$ provided $k$ goes to infinity. Let us denote by $R_k$ the collection of closed sets after the $k$-th iteration of Algorithm \ref{algorithm_1}. i.e.
\begin{equation*}
    R_k = \cup_{B \in \mathcal{C}_{d_k}} B
\end{equation*}
A quick observation is that Algorithm \ref{algorithm_1} generates a nested sequence $\{R_k\}$ of compact sets with $R_k \subset R_{k-1}$ due to the continuity of system (\ref{system}). We can therefore observe that the $N$-th output from the algorithm is given by
\begin{equation}
    R_N = \cap_{k=0}^N R_k
\end{equation}
and we may write
\begin{equation} \label{c_infinity}
    R_{\infty} = \cap_{k=0}^\infty R_k
\end{equation}
as the limit set of the algorithm. Our goal is to show that the robust control invariant set $R(X)$ is a subset of $R_{\infty}$ and vice versa. 

We first begin by showing that the sets $R_k$ contain the robust control invariant set.

\begin{lemma} \label{rcis_subset}
    Consider system (\ref{system}) with constaint sets (\ref{constraint}). If Assumptions \ref{compactness_assumption} and \ref{continuity_assumption} hold, then the sets $R_k$ obtained at the k-th iteration of Algorithm \ref{algorithm_1} contain the largest robust control invariant set. i.e.
    \begin{equation*}
        R(X) \subset R_k, ~ \forall k \in \mathbb{Z}_+
    \end{equation*}
\end{lemma}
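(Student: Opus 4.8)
The statement to prove is that $R(X) \subset R_k$ for every $k \in \mathbb{Z}_+$, where $R_k = \cup_{B \in \mathcal{C}_{d_k}} B$ is the union of cells surviving the $k$-th iteration of Algorithm~\ref{algorithm_1}. The natural approach is induction on $k$, with Theorem~\ref{nonleaving_robust} supplying the inductive step. The base case is immediate: $\mathcal{C}_{d_0}$ is initialized to a covering of $X$, so $R_0 \supseteq X \supseteq R(X)$ by definition of the largest RCIS contained in $X$.

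For the inductive step, suppose $R(X) \subset R_k$. First I would observe that $R(X)$ is itself a robust control invariant set for system~(\ref{system}) restricted to the domain covered by $\mathcal{C}_{d_k}$ — indeed, $R(X)$ is robust control invariant in $X$, hence a fortiori in the subset $R_k$, since the defining feedback law keeps trajectories inside $R(X) \subset R_k$ for every disturbance realization. The subdivision step merely refines the covering without changing the underlying set, so $\hat{\mathcal{C}}_{d_k}$ still covers $R_k \supseteq R(X)$. Now $\mathcal{C}_{d_{k+1}} = select(G_k)$ retains exactly the cells lying in $\cap_{w \in W} I^+(G_w^k)$, and Theorem~\ref{nonleaving_robust} (applied with the covering $\hat{\mathcal{C}}_{d_k}$ in place of the original covering of $X$) guarantees that this intersection is a closed neighbourhood of the largest RCIS of the system on that domain. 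Since $R(X)$ is a robust control invariant set on that domain, it is contained in the largest one, hence contained in $\cap_{w} I^+(G_w^k)$, hence covered by the retained cells $\mathcal{C}_{d_{k+1}}$. Therefore $R(X) \subset R_{k+1}$, closing the induction.

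The one subtlety — and the step I expect to require the most care — is the claim that Theorem~\ref{nonleaving_robust} may legitimately be re-applied at each iteration with the \emph{current} covering $\hat{\mathcal{C}}_{d_k}$ of the \emph{current} set $R_k$, rather than a covering of the original $X$. This needs the observation that Theorem~\ref{nonleaving_robust} and the underlying Theorem~\ref{nonleaving_autonomous} are stated for an arbitrary compact domain with an arbitrary finite covering, and that discarding cells with no infinite admissible path cannot remove any point of a genuine RCIS: any state in $R(X)$ admits, by definition, a feedback keeping it in $R(X)$ forever under every $w$, so the corresponding symbolic trajectory on $G_w^k$ is an infinite admissible path, placing that state's cell in $I^+(G_w^k)$ for every $w$. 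One must also note that the monotonicity $R_k \subset R_{k-1}$ (asserted just before the lemma) is consistent with this: cells are only ever discarded, never added, so once $R(X)$ is contained in the survivors it stays contained. A minor bookkeeping point is handling the early-termination branches of Algorithm~\ref{algorithm_1} (when $\mathcal{C}_{d_k} = \emptyset$ or the covering stabilizes), but in those cases $R_k$ simply equals the last nonempty $R_{k-1}$ or is empty only if $R(X)$ was already shown empty, so the containment is preserved trivially.
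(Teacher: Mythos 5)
Your proof is correct and follows essentially the same route as the paper's: the base case $R(X) \subset X = R_0$ plus an appeal to Theorem~\ref{nonleaving_robust} to conclude $R(X) \subset R_k$ for each $k$. The paper compresses your entire inductive step into one sentence, so your version merely makes explicit the re-application of Theorem~\ref{nonleaving_robust} to the current covering at each iteration, which the paper leaves implicit.
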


\begin{proof}
    Obviously, we know that $R(X) \subset X = R_0$. It also follows from Theorem \ref{nonleaving_robust} that $R(X) \subset R_k \subset R_0$. Therefore $R(X) \subset R_k ~ \forall k \in \mathbb{Z}_+$.
\end{proof}

We now show that the limit set of Algorithm \ref{algorithm_1} is robust control invariant.
\begin{lemma} \label{rcis_superset}
    Consider system (\ref{system}) with constaint sets (\ref{constraint}). If Assumptions \ref{compactness_assumption} and \ref{continuity_assumption} hold, then the limit set $R_{\infty}$ obtained from Algorithm \ref{algorithm_1} is robust control invariant.
\end{lemma}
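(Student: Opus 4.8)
The plan is to show that $R_\infty$ satisfies the defining property of a robust control invariant set in Definition \ref{robust_control_invariant_set}: for every $x \in R_\infty$ there is an admissible control value $u \in U$ with $f(x,u,w) \in R_\infty$ for all $w \in W$. Since $R_\infty = \cap_{k=0}^\infty R_k$ is a nested intersection of nonempty compact sets, it is itself compact, and it is contained in $X$; the content is in producing the feedback. Fix $x \in R_\infty$. For each $k$, $x$ lies in some cell $B^k \in \mathcal{C}_{d_k}$ that survived the selection step, so $B^k$ has an infinite admissible path passing through it in $G^k_w$ for \emph{every} $w \in W$. Unwinding the construction of the symbolic image $G^k_w$ and of $I^+(G^k_w)$, this means: for each disturbance value $w$ there is an edge $B^k \to B'$ with $B' \in \mathcal{C}_{d_k}$ and $F(B^k,w) \cap B' \neq \emptyset$, i.e.\ there exist $\tilde x \in B^k$ and $u \in U$ with $f(\tilde x, u, w) \in B' \subset R_k$.

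The heart of the argument is a compactness/limiting step to pass from these cell-level, approximate statements to an exact statement at the point $x$. The idea: consider a sequence of disturbances $w$ ranging over a dense (or, by sampling, cofinal) subset of $W$, and for each $k$ collect the control values $u^k(w) \in U$ witnessing that $x$'s cell $B^k$ maps into $R_k$ under $w$. Because $U$ is compact (Assumption \ref{compactness_assumption}) and $\mathrm{diam}(\mathcal{C}_{d_k}) \to 0$, one extracts convergent subsequences of the witnessing controls; continuity of $f$ (Assumption \ref{continuity_assumption}) together with $f(\tilde x^k, u^k, w) \in R_k$ and $\tilde x^k \to x$, $R_k \downarrow R_\infty$ forces the limit control $u^\star(w)$ to satisfy $f(x, u^\star(w), w) \in \cap_k R_k = R_\infty$. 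A further diagonalization over the countable dense set of disturbances, using compactness of $W$ and continuity of $f$ in $w$, yields a \emph{single} $u^\star = u^\star(x) \in U$ that works simultaneously for all $w \in W$ — here one uses that the set $\{u \in U : f(x,u,w) \in R_\infty\}$ is closed (as $R_\infty$ is closed and $f$ continuous) and that the finite-intersection property holds along the cell chain. Defining $\mu(x) := u^\star(x)$ for each $x \in R_\infty$ then exhibits $R_\infty$ as robust control invariant.

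The main obstacle I anticipate is exactly this extraction of a common control value valid for \emph{all} $w \in W$ from per-disturbance, per-resolution witnesses: one must be careful that the symbolic-image edge "$F(B^k,w)\cap B' \ne \emptyset$'' only guarantees \emph{some} point of $B^k$ (not necessarily $x$ itself) maps into $B'$, so the limiting argument has to absorb the $O(\mathrm{diam}(\mathcal{C}_{d_k}))$ slack in both the state and, if $W$ is sampled rather than treated exactly, the disturbance. A secondary subtlety is that the argument tacitly needs $R_\infty$ nonempty for the statement to be non-vacuous; if $R_\infty = \emptyset$ the claim holds trivially, so one may assume $R_\infty \ne \emptyset$ and pick $x \in R_\infty$ as above. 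Modulo these care points, the combination of Theorem \ref{nonleaving_robust} (which gives the per-$w$ infinite-path property at every resolution), compactness of $U$ and $W$, and continuity of $f$ delivers the result.
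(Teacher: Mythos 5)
Your route is genuinely different from the paper's: the paper argues by contradiction via the weak shadowing property of symbolic images (as $\mathrm{diam}(\mathcal{C}_{d_k})\to 0$, admissible paths approach true orbits, so a point of $R_\infty$ violating invariance could only carry finite admissible paths), whereas you attempt a direct construction of the feedback by extracting limits of cell-level witnesses. The per-disturbance half of your argument is sound and, in fact, more explicit than the paper's: for each fixed $w$, compactness of $U$, $\tilde x^k \to x$, closedness and nestedness of the $R_k$, and continuity of $f$ do give a $u^\star(w) \in U$ with $f(x,u^\star(w),w) \in R_\infty$, i.e.\ the sets $U_w(x) := \{u \in U : f(x,u,w) \in R_\infty\}$ are nonempty and compact.

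The gap is the step you yourself flag as the main obstacle, and your proposed fix does not close it. Definition \ref{robust_control_invariant_set} requires a single $u = \mu(x)$ that works for \emph{all} $w \in W$, i.e.\ $\bigcap_{w \in W} U_w(x) \neq \emptyset$; your diagonalization would need the family $\{U_w(x)\}_{w\in W}$ to have the finite intersection property, and nothing in the construction supplies this. The graphs $G_w$ are built and analyzed \emph{independently} for each $w$: the surviving cell $B^k$ only guarantees, for each $w$ separately, some triple $(\tilde x, u, B')$ with $f(\tilde x,u,w) \in B'$, and these witnesses — including the control $u$ and the target cell $B'$ — may differ across disturbances. So the limit sets $U_{w_1}(x)$ and $U_{w_2}(x)$ can perfectly well be disjoint; the construction certifies only the weaker "$\forall w\, \exists u$" (disturbance-plays-second) invariance, not the "$\exists u\, \forall w$" property demanded by the definition. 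Asserting that "the finite-intersection property holds along the cell chain" is therefore not a proof step but a restatement of what must be shown. (For what it is worth, the paper's own proof has the same quantifier order in its contradiction hypothesis — it negates "$\exists w\, \forall u$" rather than "$\forall u\, \exists w$" — so it too only establishes the weaker notion; but your write-up explicitly promises to bridge exactly this gap and then does not.) To repair your argument you would need either an additional uniformity/continuity-in-$w$ hypothesis forcing the $U_w(x)$ to intersect, or a modified graph construction in which an edge out of $B_i$ certifies a single control valid for every $w \in W$ simultaneously.
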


\begin{proof}
    Recall that an admissible path on a symbolic image represents an $\varepsilon$-orbit of (\ref{system}). From the construction of Algorithm \ref{algorithm_1}, we have that $d_k \rightarrow 0$ as $k \rightarrow \infty$. As a consequence of the weak shadowing property of an admissible path on the symbolic image (see \cite{osipenko2007}, Theorem 14, 2) and the continuity of system (\ref{system}), $\varepsilon \rightarrow 0$. 
	Following similar arguments in \cite{osipenko2007} (Theorem 41, 2), we have that as $\varepsilon \rightarrow 0$ every admissible path approach the true path. Now we prove by contradiction. Suppose there exists an $x \in R_{\infty}$ such that there exist $w \in W$ for every $u \in U$ such that $f(x,u,w) \notin R_{\infty}$. This implies that only finite admissible paths pass through $x$. But this is impossible by the construction of  Algorithm \ref{algorithm_1} since infinite admissible paths pass through every $x \in R_{\infty}$, and we have obtained the desired contradiction.
\end{proof}

Finally by combining Lemmas \ref{rcis_subset} and \ref{rcis_superset}, we obtain the desired convergence result for Algorithm \ref{algorithm_1}.
\begin{theorem}
    Let $R(X)$ be the largest robust control invariant set of system (\ref{system}) with constraint sets (\ref{constraint}). Consider the sequence $\{ R_k \}_{k \in \mathbb{Z}_+}$ generated by Algorithm \ref{algorithm_1}. If Assumptions \ref{compactness_assumption} and \ref{continuity_assumption} hold, then
    \begin{equation*}
        R(X) = R_{\infty}
    \end{equation*}
\end{theorem}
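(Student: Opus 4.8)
The plan is to obtain the equality $R(X)=R_\infty$ by proving the two inclusions $R(X)\subseteq R_\infty$ and $R_\infty\subseteq R(X)$ separately, each of which is a short consequence of the two preceding lemmas together with the maximality built into the definition of $R(X)$.

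First I would establish $R(X)\subseteq R_\infty$. Lemma \ref{rcis_subset} gives $R(X)\subseteq R_k$ for every $k\in\mathbb{Z}_+$. Since $R_\infty=\cap_{k=0}^\infty R_k$ by (\ref{c_infinity}), taking the intersection over all $k$ preserves this inclusion, so $R(X)\subseteq R_\infty$. Next I would establish the reverse inclusion $R_\infty\subseteq R(X)$. By construction of Algorithm \ref{algorithm_1} each $R_k\subseteq R_0=X$, hence $R_\infty\subseteq X$; moreover $\{R_k\}$ is a nested sequence of nonempty compact sets (as already observed just before (\ref{c_infinity})), so $R_\infty$ is itself compact and is therefore a legitimate candidate invariant set. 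Lemma \ref{rcis_superset} shows that $R_\infty$ is robust control invariant for system (\ref{system}) under constraints (\ref{constraint}). Since $R(X)$ is, by Definition \ref{robust_control_invariant_set} and the convention that it is the \emph{largest} such set, it must contain every RCIS contained in $X$; applying this to $R_\infty$ yields $R_\infty\subseteq R(X)$. Combining the two inclusions gives $R(X)=R_\infty$.

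I do not anticipate a genuine obstacle in this theorem itself: the substantive work has already been carried out in Lemmas \ref{rcis_subset} and \ref{rcis_superset}, and what remains is their conjunction through the maximality of $R(X)$. The only point I would take care to state explicitly is that $R_\infty$ is a bona fide closed (compact) subset of $X$, so that calling it robust control invariant and comparing it with the maximal RCIS is meaningful; this is immediate from the nested-compactness remark preceding (\ref{c_infinity}) and from $R_k\subseteq X$ for all $k$.
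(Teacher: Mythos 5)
Your proposal is correct and follows essentially the same route as the paper: the inclusion $R(X)\subseteq R_\infty$ from Lemma \ref{rcis_subset} together with the definition (\ref{c_infinity}), and the reverse inclusion from Lemma \ref{rcis_superset} combined with the maximality of $R(X)$. Your added remark that $R_\infty$ is a nonempty compact subset of $X$ (so that comparing it with the maximal RCIS is meaningful) is a minor but welcome clarification that the paper leaves implicit.
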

\begin{proof}
    To prove the above assertion, we need to show that the robust control invariant set $R(X)$ is a subset of $R_{\infty}$ and at the same time a superset of $R_{\infty}$. Now we proceed with the proof. \\
	First we show that $R(X)$ is a subset of $R_{\infty}$. It immediately follows from Lemma \ref{rcis_subset} that $R(X)$ is contained in every $R_k$ and therefore is also contained in $R_{\infty}$. i.e.
	\begin{equation*}
		R(X) \subset R_{\infty}
	\end{equation*}
	Next we show that $R_{\infty}$ is a subset of $R(X)$. It follows from Lemma \ref{rcis_superset} that the compact set $R_{\infty}$ is robust control invariant and therefore must be contained in $R(X)$. i.e.
	\begin{equation*}
		R_{\infty} \subset R(X)
	\end{equation*}

	Thus, combining the two Lemmas gives the desired result. i.e.
	\begin{equation*}
		R_{\infty} \subset R(X) \subset R_{\infty}
	\end{equation*}
    This completes the proof.
\end{proof}

\subsection{Inner approximation}
Notice that by the construction of Algorithm \ref{algorithm_1}, the sets $R_k$ are outer approximations of the largest robust control invariant set contained in $X$ (see Lemma \ref{rcis_subset}). Therefore, though the convergence results show that $R(X)$ can be computed, in practice it is impossible to infinitely go on with the construction of an arbitrary fine covering of $X$. Thus, we have that 
\begin{equation} \label{outer_inclusion}
    R_k \subseteq R(X) + \varepsilon \mathbb{B}
\end{equation}
at the $k$-th iteration of Algorithm \ref{algorithm_1}. While the set obtained gives an idea of the location and structure of the robust control invariant set of system (\ref{system}) contained $X$, in the context of control theory, the sets $R_k$ obtained from Algorithm \ref{algorithm_1} are not very useful since they are not robust control invariant. Therefore, an inner approximation is desired for controller design purposes.

One approach often used in the dynamic programming type algorithm is to initialize the algorithm from a robust control invariant set and gradually enlarge it (see \cite{blanchini2015,fiacchini2010}). An alternative approach is the use of contractive sets. The former is not applicable to our algorithm since the feasible trajections of the system (\ref{system}) is approximated within the state constraint set and the latter assumes that $X$ must contain a convex $\lambda$-contractive set and therefore restrictive. Recall that we do not assume contractivity in our analysis.

We take an approach similar to the stopping criterion of \cite{rungger2017} for obtaining the inner approximation. Thus to obtain an inner approximation using Algorithm \ref{algorithm_1}, we modify system (\ref{system}) to 
\begin{equation} \label{modified_system}
    x^+ = f(x,u,w) + \varepsilon \mathbb{B}
\end{equation}
and show that there exist $k \in \mathbb{Z}_+$ such that
\begin{equation}\label{inner_condition}
    R^\varepsilon_k  \subseteq R_{k+1}^\varepsilon + \varepsilon \mathbb{B}    
\end{equation}
where $R^\varepsilon_k$ denotes the $k$-th set generated by Algorithm \ref{algorithm_1} with the graphs constructed using the modified system (\ref{modified_system}).

The algorithm for inner approximation is presented in Algorithm \ref{algorithm_2}. The termination criterion ensures that an inner approximation is obtained.
\begin{algorithm} \label{algorithm_2}

\caption{Computing inner approximation of maximal robust control invariant set}
\KwIn{system (\ref{modified_system}), constraint sets (\ref{constraint}) and $\varepsilon$}
\KwOut{Inner approximation of largest robust control invariant set}
$\mathcal{C}_{d_0} \leftarrow X$ \tcp{Initialization}

\For{$k \leftarrow 1,2,3,\cdots$}{
    \If{$\mathcal{C}_{d_{k}} = \emptyset$}{ 
        $\mathcal{C}_{d_N} \leftarrow \mathcal{C}_{d_{k}}$ \\
        \textbf{break}       
    }

    \If{$\mathcal{C}_{d_{k-1}} \subseteq \mathcal{C}_{d_{k}} + \varepsilon \mathbb{B} $}{
        $\mathcal{C}_{d_N} \leftarrow \mathcal{C}_{d_{k}}$ \\
        \textbf{break}       
    }

    $\hat{\mathcal{C}}_{d_k} \leftarrow subdivide(\mathcal{C}_{d_{k}})$ \\
    $G_k \leftarrow graph(\hat{\mathcal{C}}_{d_k}) $ \\
    $\mathcal{C}_{d_{k+1}} \leftarrow select(G_k)$

}
\textbf{return $\mathcal{C}_{d_N}$}

\end{algorithm}

In what follows, we show that the output of Algorithm \ref{algorithm_2} is an inner approximation of the largest robust control invariant set.

\begin{lemma}
    Consider system (\ref{modified_system}) and constraint sets (\ref{constraint}). Let $\{ R_k^\varepsilon, k \in \mathbb{Z}_+\}$ be a sequence obtained from Algorithm \ref{algorithm_2}. If Assumptions \ref{compactness_assumption} and \ref{continuity_assumption} hold, then for any $\varepsilon > 0$, there exist $k \in \mathbb{Z}_+$ so that (\ref{inner_condition}) holds.
\end{lemma}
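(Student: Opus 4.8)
The plan is to treat this purely as a topological/compactness statement about the numerical procedure: Algorithm~\ref{algorithm_2}, exactly like Algorithm~\ref{algorithm_1}, produces a \emph{nested, decreasing} family of compact sets $R_0^\varepsilon \supseteq R_1^\varepsilon \supseteq R_2^\varepsilon \supseteq \cdots$, because the $select()$ step only ever retains a subcollection of the cells produced by $subdivide()$, so $R_{k+1}^\varepsilon \subseteq R_k^\varepsilon$ regardless of whether the graphs $G_k$ are built from (\ref{system}) or from the inflated dynamics (\ref{modified_system}), and $R_0^\varepsilon = X$ is compact by Assumption~\ref{compactness_assumption}. I want to show that any such sequence must eventually satisfy the tolerance inequality (\ref{inner_condition}). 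First I would dispose of the degenerate case: if $R_k^\varepsilon = \emptyset$ for some $k$, then $R_{k+1}^\varepsilon = \emptyset$ and (\ref{inner_condition}) holds trivially; so from now on assume every $R_k^\varepsilon$ is non-empty.

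Next I would invoke Cantor's intersection theorem: a decreasing sequence of non-empty compact subsets of $\mathbb{R}^n$ has non-empty compact intersection, so $R_\infty^\varepsilon := \cap_{k=0}^\infty R_k^\varepsilon$ is a non-empty compact set with $R_\infty^\varepsilon \subseteq R_k^\varepsilon$ for every $k$. The core step is then to establish the one-sided Hausdorff convergence $\sup_{x \in R_k^\varepsilon}\,\mathrm{dist}(x, R_\infty^\varepsilon) \to 0$ as $k \to \infty$. I would argue this by contradiction and compactness of $X$: if it failed, there would be a $\delta > 0$ and points $x_{k_j} \in R_{k_j}^\varepsilon$ along a subsequence with $\mathrm{dist}(x_{k_j}, R_\infty^\varepsilon) \geq \delta$; since all $x_{k_j}$ lie in the compact set $X$, a further subsequence converges to some $x^\star$, and because the $R_m^\varepsilon$ are nested and closed, $x^\star \in R_m^\varepsilon$ for every $m$, hence $x^\star \in R_\infty^\varepsilon$, contradicting $\mathrm{dist}(x_{k_j}, R_\infty^\varepsilon) \geq \delta$.

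Given this convergence, for the prescribed $\varepsilon > 0$ there exists a $k$ with $\sup_{x \in R_k^\varepsilon}\,\mathrm{dist}(x, R_\infty^\varepsilon) < \varepsilon$, i.e. $R_k^\varepsilon \subseteq R_\infty^\varepsilon + \varepsilon \mathbb{B}$; combining this with $R_\infty^\varepsilon \subseteq R_{k+1}^\varepsilon$ gives $R_k^\varepsilon \subseteq R_{k+1}^\varepsilon + \varepsilon \mathbb{B}$, which is precisely (\ref{inner_condition}). I expect the main obstacle to be the rigorous justification of the Hausdorff convergence of the nested compact sets, and in particular making clear that, unlike the convergence proof for Algorithm~\ref{algorithm_1}, this argument does \emph{not} rely on $d_k \to 0$ or on the shadowing properties of the symbolic image — only nesting and compactness of $X$ are used — together with some care in the bookkeeping so that the index $k$ produced here is consistent with the $\emptyset$- and tolerance-based break conditions as written in Algorithm~\ref{algorithm_2}.
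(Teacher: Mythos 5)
Your proposal is correct, and it takes a genuinely different --- and in fact more self-contained --- route than the paper. The paper's own proof argues that, by the earlier convergence analysis, there is a $k'$ such that $R_k^\varepsilon \subseteq R(X) + \varepsilon\mathbb{B}$ for all $k \geq k'$, and then combines this with $R(X) \subseteq R_{k+1}^\varepsilon$ (Lemma~\ref{rcis_subset}) to conclude $R_k^\varepsilon \subseteq R_{k+1}^\varepsilon + \varepsilon\mathbb{B}$; this implicitly leans on the full convergence machinery (the shadowing/limit-set arguments behind $R_\infty = R(X)$), since Lemma~\ref{rcis_subset} alone only gives the inclusion $R(X) \subseteq R_k$ and not the $\varepsilon$-proximity of $R_k^\varepsilon$ to $R(X)$. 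You instead replace $R(X)$ by $R_\infty^\varepsilon = \cap_k R_k^\varepsilon$ and derive everything from nestedness and compactness: Cantor's intersection theorem plus the standard one-sided Hausdorff convergence of a decreasing sequence of compacta to its intersection gives $R_k^\varepsilon \subseteq R_\infty^\varepsilon + \varepsilon\mathbb{B}$ for large $k$, and $R_\infty^\varepsilon \subseteq R_{k+1}^\varepsilon$ is immediate. This buys a proof that needs no dynamical input at all (no $d_k \to 0$, no shadowing, no identification of the limit set with the maximal RCIS), and it also cleanly handles the empty case; the only minor points to keep straight are that distances measured in the Euclidean norm still certify membership in $R_\infty^\varepsilon + \varepsilon\mathbb{B}$ when $\mathbb{B}$ is the infinity-norm ball, and the off-by-one bookkeeping against the break condition in Algorithm~\ref{algorithm_2}, both of which you flag. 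The paper's version is shorter but borrows its force from the (heavier, and less rigorously established) convergence theorem; yours is longer but elementary and airtight.
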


\begin{proof}
    From Lemma \ref{rcis_subset} we know that the sets $\{ R_k^\varepsilon \}$ obtained from Algorithm \ref{algorithm_1} form a closed neighbourhood of $R(X)$ and therefore there exist $k^\prime \in \mathbb{Z}_+$ such that for all $k \geq k^\prime$, we have that $R_k^\varepsilon \subseteq R(X) + \varepsilon \mathbb{B}$. It then follows that $R^\varepsilon_k  \subseteq R_{k+1}^\varepsilon + \varepsilon \mathbb{B}$. 
\end{proof}

We now show that if the stopping criterion is met in Algorithm \ref{algorithm_2}, then its output is an inner approximation of the robust control invariant set. Note that we do not consider the case where the largest robust control invariant is empty since this is trivial.
\begin{theorem}
    Consider system (\ref{modified_system}) and constraint sets (\ref{constraint}) and let $k^\prime \in \mathbb{Z}_+$ be the smallest index so that (\ref{inner_condition}) hold for some $\varepsilon > 0$. If Assumptions \ref{compactness_assumption} and \ref{continuity_assumption} hold, then for any $k \geq k^\prime$, the set $R_{k+1}^\varepsilon$ is robust control invariant.
\end{theorem}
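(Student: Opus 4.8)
The plan is to verify Definition~\ref{robust_control_invariant_set} directly for $R_{k+1}^\varepsilon$ against the \emph{original} system~(\ref{system}). This is all that is needed: the moment $R_{k+1}^\varepsilon$ is shown robust control invariant, maximality of $R(X)$ forces $R_{k+1}^\varepsilon\subseteq R(X)$, so it is automatically a genuine inner approximation. The guiding idea is that the $\varepsilon$-fattening in~(\ref{modified_system}) is designed to bake an $\varepsilon$ robustness margin into everything the symbolic image certifies: an edge $B_i\to B_j$ of the symbolic image of~(\ref{modified_system}) asserts $(f(B_i,U,w)+\varepsilon\mathbb{B})\cap B_j\neq\emptyset$, and running $select$ on this fattened image is intended to yield a set from which a genuine one-step robust transition into the set is available with $\varepsilon$ to spare. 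One may also record, via Lemma~\ref{rcis_subset} applied to~(\ref{modified_system}), that $R_{k+1}^\varepsilon$ contains the largest RCIS $R^\varepsilon(X)$ of the inclusion $x^+\in f(x,u,w)+\varepsilon\mathbb{B}$, whose defining property is that from each of its points a \emph{single} $u\in U$ keeps the true successor $f(x,u,w)$ inside it, with an $\varepsilon$-ball of slack, for every $w\in W$.

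Concretely I would take $x\in R_{k+1}^\varepsilon$ and $w\in W$. Since the cell of $\hat{\mathcal{C}}_{d_k}$ containing $x$ survived $select$, it carries an infinite admissible path in $G_k^w$ for \emph{every} $w$; translating the first edge of this path back to the true dynamics through the weak-shadowing correspondence used in Lemma~\ref{rcis_superset} (with Assumption~\ref{continuity_assumption} to pass from cell-wise to point-wise statements as $d_k\to 0$), there is a $u\in U$ placing $f(x,u,w)$ in $R_k^\varepsilon$ with the $\varepsilon$-margin inherited from the fattening in~(\ref{modified_system}). Then I would invoke the stopping test: for $k\ge k'$, (\ref{inner_condition}) gives $R_k^\varepsilon\subseteq R_{k+1}^\varepsilon+\varepsilon\mathbb{B}$, and by construction of Algorithm~\ref{algorithm_1} the sequence is nested ($R_{k+1}^\varepsilon\subseteq R_k^\varepsilon$), so $R_k^\varepsilon$ and $R_{k+1}^\varepsilon$ coincide up to an $\varepsilon$-collar. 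Feeding this collar inclusion into the margin pins the successor into $R_{k+1}^\varepsilon$ itself; and because a surviving cell has an outgoing edge in $G_k^w$ for \emph{all} $w$ simultaneously, the $\varepsilon$-fattened successor set serves as a common target, so the control can be taken as a function of $x$ alone. This delivers $\mu(x)\in U$ with $f(x,\mu(x),w)\in R_{k+1}^\varepsilon$ for all $w\in W$, i.e. robust control invariance of $R_{k+1}^\varepsilon$. Persistence of (\ref{inner_condition}) for every $k\ge k'$ then follows from the monotone, nested structure of the construction, giving the conclusion for all such $k$.

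The main obstacle is the step that turns the per-disturbance, cell-level statement ``every cell of $R_{k+1}^\varepsilon$ has an infinite admissible path under each $w$'' into the uniform, point-level statement ``one feedback $\mu(x)$ keeps the true state inside $R_{k+1}^\varepsilon$ against every $w$''. A graph edge is only a set-intersection (``touching'') condition, not a containment, so extracting the full $\varepsilon$ margin from it --- and checking that the inflation parameter in~(\ref{modified_system}), the slack $\varepsilon$ in the stopping test, and the covering diameter $d_k$ are matched tightly enough that the successor lands \emph{in} $R_{k+1}^\varepsilon$ rather than merely in an $O(\varepsilon)$ neighbourhood of it --- is where compactness, continuity of $f$, the diameters $d_k\to 0$, and the shadowing property of the symbolic image must be combined with care.
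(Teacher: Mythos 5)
Your proposal follows essentially the same route as the paper: use the $\varepsilon$-fattening in (\ref{modified_system}) to obtain, for each $x\in R_{k+1}^\varepsilon$ and $w\in W$, some $u\in U$ with $f(x,u,w)+\varepsilon\mathbb{B}\subseteq R_k^\varepsilon$, then chain this with the stopping test $R_k^\varepsilon\subseteq R_{k+1}^\varepsilon+\varepsilon\mathbb{B}$ and cancel the $\varepsilon\mathbb{B}$ to place $f(x,u,w)$ in $R_{k+1}^\varepsilon$. The step you single out as the main obstacle --- extracting a genuine containment-with-margin from intersection-based graph edges, and the $\forall w\,\exists u$ versus $\exists u\,\forall w$ quantifier order --- is precisely the part the paper's own proof asserts in a single line without further justification, so your account is, if anything, the more candid of the two.
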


\begin{proof}
    Let $x \in R_{k+1}^\varepsilon$ such that $k \geq k^\prime$. Then for all $w \in W$ there exist $u \in U$ such that 
        \begin{equation*}
            f(x,u,w) + \varepsilon \mathbb{B} \subseteq R_k^\varepsilon \subseteq R_{k+1}^\varepsilon + \varepsilon \mathbb{B}
        \end{equation*}
        This implies that $ f(x,u,w) \subseteq R_{k+1}^\varepsilon$ and therefore $R_{k+1}^\varepsilon$ is robust control invariant.
\end{proof}

Notice that if the disturbance is affine in $w$, then the modication in (\ref{modified_system}) is equivalent to increasing the disturbance set $W$ by $\varepsilon$ i.e. $W_{\varepsilon} \subseteq W + \varepsilon \mathbb{B}$.

\subsection{Special case: Input and disturbance affine systems} \label{computational_issues}
The algorithm as presented require the construction of several graphs of system (\ref{system}) and constraint sets (\ref{constraint}) under different disturbance realization. The computational burden for contructing the symbolic images may be excessive. Therefore in this section, we provide a simple method to reduce the computational load. The key idea is to cancel some or all of the disturbances by employing concepts from feedback linearization of nonlinear systems. 

To achieve this, we require that the discrete-time uncertain control system (\ref{system}) to have the following structure
\begin{equation}\label{input_affine}
	x^+ = f(x) + g(x)u + h(x)w
\end{equation}
where $f(\cdot) \in \mathbb{R}^{n}$, $g(\cdot) \in \mathbb{R}^{n} \times \mathbb{R}^{m}$ and $h(\cdot) \in \mathbb{R}^{n} \times \mathbb{R}^{n}$. This is possible if system (\ref{system}) is affine in the input and the disturbance. 
If the state equation takes the form (\ref{input_affine}), then we can cancel out some or all disturbances via the equation
\begin{equation} \label{input_transformation}
	u = - g(x)^{-1} h(x) w + v
\end{equation}
to obtain the transformed equation
\begin{equation}\label{input_affine_transformed}
	x^+ = f(x) + g(x)v
\end{equation}

The caveat to using this approach is that the state dependent matrix $g(\cdot)$ must be nonsingular for every $x \in X$. If $g(\cdot)$ is nonsingular, then the bounds on $v$ can be obtained from Equation \ref{input_transformation}. Notice that if $g(\cdot)$ is independent of the state i.e. constant, then the bounds on $v$ remain constant and therefore can be determined offline prior to the start of the algorithm. However, if $g(\cdot)$ is state dependent, then the bounds on $v$ may vary for every $x \in X$ and hence has to be determined online. Unfortunately matrix inversion may not be cheap.

In cases where the $g(\cdot)$ matrix is non-square, it is possible to make it square by introducing arbitrary inputs. 

\begin{proposition} \label{equivalent_systems}
    The following constrained systems are equivalent:
    \begin{enumerate}[(i)]
    \item $x^+ = f_1(x) + g_1(x)u + h_1(x)w$ with $x \in X$, $u \in U$ and $w \in W$
    \item $x^+ = f_1(x) + g_1(x)u + g_2(x)u_a + h_1(x)w$ with $x \in X$, $u \in U$, $u_a \in \{0\}$ and $w \in W$
    \end{enumerate}
\end{proposition}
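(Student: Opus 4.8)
The plan is to show that, under the natural notion of equivalence for the constrained systems studied here, the two systems in Proposition~\ref{equivalent_systems} induce exactly the same parametrized (set-valued) map $F(x,w)$ of \eqref{differential_inclusion}. Since Algorithm~\ref{algorithm_1}, the set $K$ of \eqref{robust_control_invariance_condition}, Theorem~\ref{nonleaving_robust}, and the notion of largest RCIS in Definition~\ref{robust_control_invariant_set} all depend on a system only through its difference inclusion \eqref{system_difference_inclusion}, equality of the two maps immediately yields identical symbolic images $G_w$ for every finite covering $\mathcal{C}$, identical outer approximations at every iteration, and identical largest robust control invariant sets $R(X)$. So I would first declare ``equivalent'' to mean ``induces the same difference inclusion $x^+ \in F(x,w)$ on $X \times W$,'' and then everything reduces to a one-line set computation.

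The computation: writing $F_{(i)}$ and $F_{(ii)}$ for the parametrized maps of systems (i) and (ii), we have
\begin{equation*}
F_{(i)}(x,w) = \{\, f_1(x) + g_1(x)u + h_1(x)w \;:\; u \in U \,\},
\end{equation*}
while, because the augmented input $u_a$ is constrained to the singleton $\{0\}$ so that $g_2(x)u_a = g_2(x)\cdot 0 = 0$ for every $x \in X$,
\begin{align*}
F_{(ii)}(x,w) &= \{\, f_1(x) + g_1(x)u + g_2(x)u_a + h_1(x)w \;:\; u \in U,\ u_a \in \{0\} \,\} \\
&= \{\, f_1(x) + g_1(x)u + h_1(x)w \;:\; u \in U \,\}.
\end{align*}
Hence $F_{(i)}(x,w) = F_{(ii)}(x,w)$ for all $x \in X$ and $w \in W$, and since the two systems carry the same constraint sets $X$ and $W$, the difference inclusions \eqref{system_difference_inclusion} coincide.

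From this I would conclude the equivalence in every sense used in the paper: the symbolic images $G_w$ constructed in the graph step are literally the same graphs, so the selected cells, the sets $R_k$, and the limit set $R_\infty$ produced by Algorithm~\ref{algorithm_1} are identical for the two systems, and in particular they share the same largest RCIS; moreover a feedback law $u = \mu(x) \in U$ that makes (i) robustly invariant makes (ii) robustly invariant via the pair $(\mu(x), 0)$, and conversely. The main obstacle here is conceptual rather than technical: the only point needing care is fixing the right notion of equivalence so that the trivial identity $g_2(x)u_a \equiv 0$ actually delivers what is claimed; once that is pinned down, the proof is immediate and contains no genuine calculation.
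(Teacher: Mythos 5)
Your proof is correct and rests on exactly the same observation as the paper's: since $u_a$ is constrained to the singleton $\{0\}$, the term $g_2(x)u_a$ vanishes identically and the dynamics of the two systems coincide. Your version is somewhat more careful than the paper's one-line argument in that it pins down ``equivalence'' as equality of the induced set-valued maps $F(x,w)$, but this is a formalization of the same idea rather than a different route.
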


\begin{proof}
    It is obvious that if $u_a = 0$ for all $t \in \mathbb{Z}_+$, then $g_2(\cdot)$ vanishes and has no effect on the dynamics of $(ii)$. Hence, the two systems are equivalent. 
\end{proof}

Proposition \ref{equivalent_systems} shows that arbitrary inputs can be added to a difference equation without affecting the dynamics so long as it is constrained to the origin. This makes it possible to alter the structure of $g(\cdot)$ to a square matrix if it is not already a square. This is demonstrated using a linear system in Section \ref{section_4}.

\section{Illustrative examples} \label{section_4}
In this section we conclude our study by demonstrating the efficacy of our algorithm on two problems. Both algorithms were implemented in the numerical computation language Julia \cite{bezanson2017}. In the first example, we use a simple two-dimensional linear model to show how feedback linearization-like concepts can be used to reduce the computational load and compare the inner and outer approximation of the invariant sets generated by our algorithm with the algorithm presented in \cite{rungger2017}. In the second example, we apply the algorithm to a two-dimensional nonlinear model used in the work of \cite{fiacchini2010} and compare the estimated robust control invariant set to other works.

In both examples, 10 points close to the edges of each cell were uniformly sampled and 5 points were also uniformly sampled in $U$. Also the vertices of the disturbance set were selected since both systems are affine in the disturbance. The maximum number of iterations $N$ in Algorithm \ref{algorithm_1} and $\varepsilon$ in Algorithm \ref{algorithm_2} are set at $16$ and $0.001$ respectively.

\subsection{Example 1}
Consider the linear system
\begin{equation*}
	x^+ = Ax + Bu + Gw
\end{equation*}
where $A$, $B$, and $G$ are given by 
\begin{equation*}
	A = \begin{bmatrix} 0.0 & 1.0 \\ 1.0 & 1.0 \end{bmatrix}, B =   \begin{bmatrix} 0.0 \\ 1.0 \end{bmatrix} ~\text{and} ~ G= \begin{bmatrix} 1.0 & 0.0 \\ 0.0 & 1.0 \end{bmatrix}.
\end{equation*}
The constraints on the states and input are $X = \{ x \in \mathbb{R}^2 : \| x \|_{\infty} \leq 5 \} $ and $U = \{ u \in \mathbb{R} : \| u \|_{\infty} \leq 2 \}$ respectively. The disturbance on the other hand is restricted to the set $W = \{ w \in \mathbb{R}^2 : \| w \|_{\infty} \leq 0.3  \} $. 

As can be observed, $B$ is a vector and therefore not invertible. However from Proposition \ref{equivalent_systems}, its structure can be altered without changing the dynamics by introducing arbitrary inputs. In this case the modified equation becomes
\begin{equation}
    x^+ = Ax + \begin{bmatrix} 0.0 & 1.0 \\ 1.0 & 0.0 \end{bmatrix} \begin{bmatrix} u_1 \\ u_a \end{bmatrix} + Gw
\end{equation}
With this structural change, the system can be transformed to that of Equation \ref{input_affine_transformed} such that
\begin{equation}
    v = 
    \begin{bmatrix}
        u_1 \\
        u_a    
    \end{bmatrix}
    + 
    \begin{bmatrix}
        0.0 & 1.0 \\
        1.0 & 0.0
    \end{bmatrix}^{-1}
    \begin{bmatrix}
        1.0 & 0.0 \\
        0.0 & 1.0
    \end{bmatrix} w
\end{equation}
Since $g(\cdot) = B$ is invertible for all $x(t) \in X$ and independent of the current state, the bounds on $v(t)$ can be obtained.
\begin{equation}
    v = 
    \begin{bmatrix}
        u_1 + w_2\\
        u_a + w_1    
    \end{bmatrix}
\end{equation}
From the above equation, $|v_1| \leq 1.7$ and $|v_2| \leq 0.3$. Notice that $v_2$ is still a disturbance while $v_1$ is an input. With the transformed system having 1 disturbance, the graph construction reduces to 2 instead of 4. This cuts down the computation time by half.

We compare the results outer and inner approximations of the largest robust control invariant set obtained by Algorithms \ref{algorithm_1} and \ref{algorithm_2} respectively, to that of \cite{rungger2017}. We allowed the latter to run for a suffiently long time to ensure its as close as possible to the actual invariant set. As can be observed in Figure \ref{robust_control_invariant_sets_linear}, our proposed algorithms are able to provide inner and outer approximations of the largest robust control invariant.

\begin{figure}[t] 
    \center{\includegraphics[width=\textwidth]{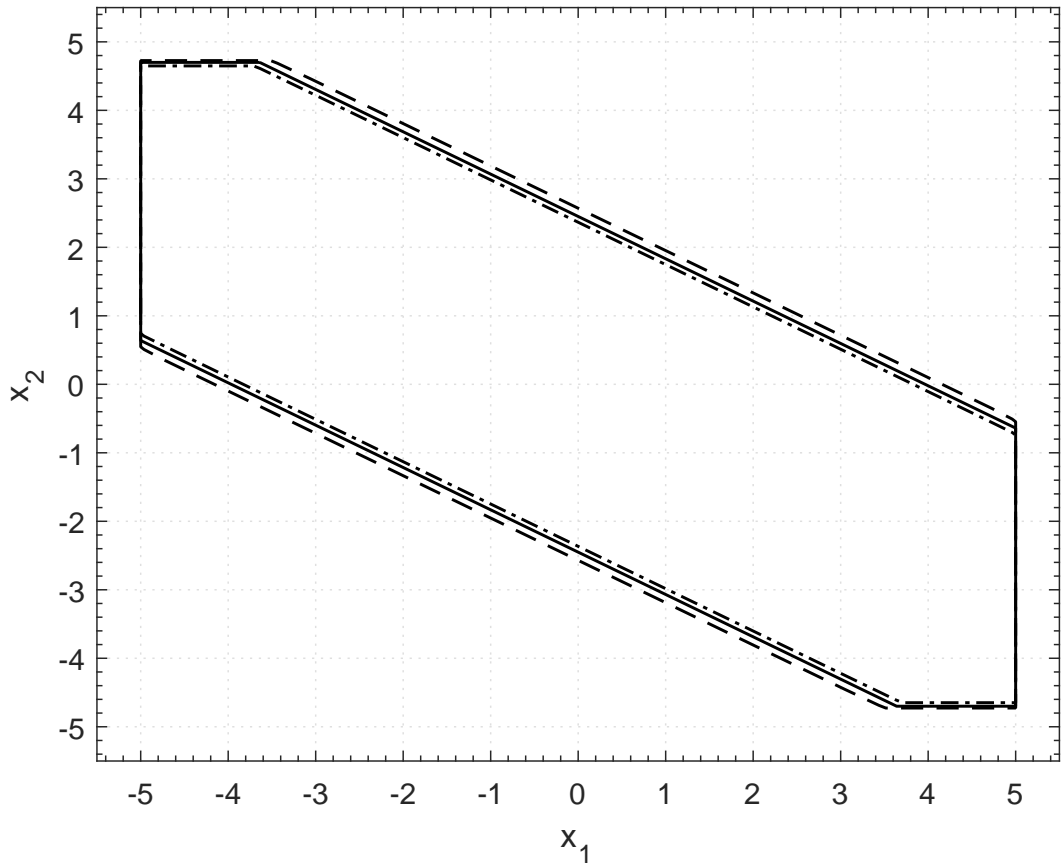}}
    \caption{\label{robust_control_invariant_sets_linear} Comparison of inner (dashed dot) and outer approximations (dashes) obtained from Algorithm \ref{algorithm_1} and that of \cite{rungger2017} (solid) for a two dimensional linear system. The invarinat sets in the figure are obtained by finding the convexhull of the cells obtained from the Algorithms.}
\end{figure}

\subsection{Example 2}
Consider the nonlinear system
{\footnotesize
\begin{equation*}\label{eqn:linear_model}
    \begin{bmatrix} 
        x_1^+ \\
        x_2^+
    \end{bmatrix}
    = 
    \begin{bmatrix} 
        1.0 & T \\
        T & 1.0
    \end{bmatrix}
    \begin{bmatrix} 
        x_1 \\
        x_2
    \end{bmatrix}
    + 
    T
    \left\{
    \mu
    \begin{bmatrix} 
        1.0 \\
        1.0
    \end{bmatrix}
    + (1-\mu)
    \begin{bmatrix} 
        1.0 & 0.0\\
        0.0 & -4.0
    \end{bmatrix} x
    \right\}
    u
    + 
    T
    \begin{bmatrix} 
        1.0 & 0.0\\
        0.0 & 1.0 
    \end{bmatrix}
    \begin{bmatrix} 
        w_1 \\
        w_2
    \end{bmatrix}
\end{equation*}
where $T=0.01$, $\mu = 0.9$.
}
The constraints on the states and input are $X = \{ x \in \mathbb{R}^2 : \| x \|_{\infty} \leq 4  \} $ and $U = \{ u \in \mathbb{R} : \| u \|_{\infty} \leq 2  \}$ respectively. The disturbance on the other hand is restricted to the set $W = \{ w \in \mathbb{R}^2 : \| w \|_{\infty} \leq 0.4  \} $.

Figure \ref{nonlinear_comparison} shows the comparison between the inner approximation of the control invariant set obtained after the 16th subvidision from Algorithm \ref{algorithm_2} and that of the \cite{fiacchini2010}. It also shows the robust control invariant set obtained from Algorithm \ref{algorithm_2}.
\begin{figure}[t] 
    \center{\includegraphics[width=\textwidth]{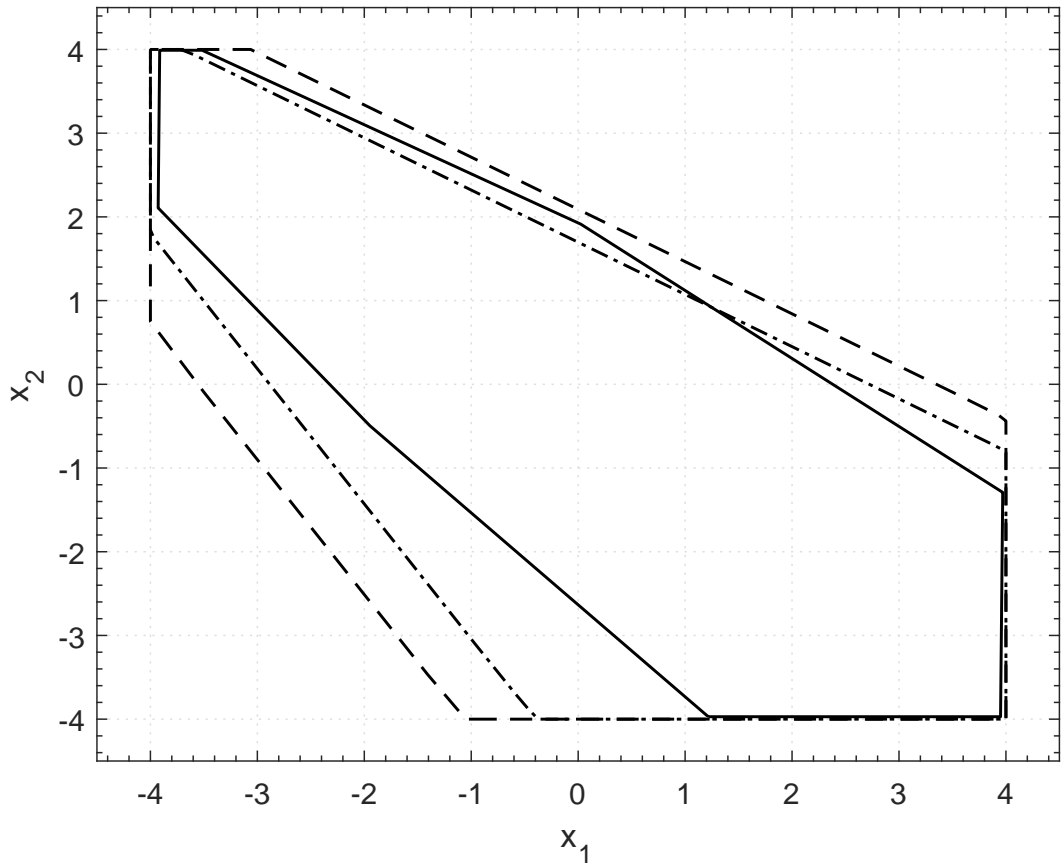}}
    \caption{ \label{nonlinear_comparison} Comparison of inner approximations of control invariant set (dashes), robust control invariant set (dashed dot) obtained from Algorithm \ref{algorithm_2} and control invariant set obtained from \cite{fiacchini2010} (solid) for a two dimensional nonlinear system. The invarinat sets in the figure are obtained by finding the convexhull of the cells obtained in Algorithm \ref{algorithm_2}.}
\end{figure}
As can be seen in Figure \ref{nonlinear_comparison}, our Algorithm was able to compute a much larger control invariant set compared to that of \cite{fiacchini2010}. 

\section{Concluding remarks} \label{section_5}
Given a discrete-time time-invariant uncertain control system with bounded disturbances subject to state and input constraints, this paper obtains inner and outer approximations of the largest robust control invariant set contained in the state constraint. For this purpose, we presented an algorithm which approximates the dynamics of the uncertain control system as a directed graph allowing for analysis of the system using graph theory. The results of this work, proving convergence to the largest robust control invariant set, are important in that the show the theoretical soundness of this approach. Simulations using linear and nonlinear systems demonstrate the effectiveness of the proposed method.

Future work will focus on the computational efficiency of the proposed algorithm. Since robust control invariant sets typically have full dimension in the state constraint, it will be worth exploring ways to identify cells that does not need to be subdivided thus keeping the number of cells generated low as well as extend this approach to compute invariant sets of higher dimensional uncertain systems.

\bibliographystyle{ieeetr}
\bibliography{ref}

\end{document}